\newcommand\sshift[1]{\ensuremath{\mathbf{#1}}}
\newcommand\plan[1][\groupe{\Z^2}]{\ensuremath{\mathbf{#1}}}
\newcommand\couleur[1]{\mathbf{#1}}
\newcommand\motif[1]{\ensuremath{\mathbf{#1}}}
\newcommand\domain[1]{\ensuremath{\mathcal{D}_{#1}}}
\newcommand\enspav{\mathbf{T}}
\newcommand\espace[1][\plan]{\ensuremath{\couleur{Q}^{#1}}}
\newcommand\conf[1]{#1}
\newcommand\shiftalone[1]{\ensuremath{\sigma_{#1}}}
\newcommand\shift[2]{\ensuremath{\shiftalone{#1}(\conf{#2})}}
\newcommand\ForbPat{\ensuremath{\mathcal{F}}}
\newcommand\Tilings[1][\ForbPat]{\ensuremath{\enspav_{#1}}}
\newcommand\latin[1]{\emph{#1}}
\newcommand\ie{\latin{i.e.}, }
\newcommand\eg{\latin{e.g.}, }
\newcommand\Z{\ensuremath{\mathbb{Z}}}
\newcommand\N{\ensuremath{\mathbb{N}}}
\newcommand\opt{dessinetikz}{\input [}\opt{pdftikz}{\includegraphics{[.pdf}}1]{\opt{dessinetikz}{\input #1}\opt{pdftikz}{\includegraphics{#1.pdf}}}
\newcommand\qpfunc[1]{\ensuremath{\mathcal{Q}_{#1}}}
\newcommand\qppat[2]{\ensuremath{n_{(\motif{#1},#2)}}}
\newcommand\qppatloc[2]{\ensuremath{m_{(\motif{#1},#2)}}}
\newcommand\fourpat[3]{
\ensuremath{\motif{#1}_{|[-#2-#3;-#2]^2}},
\ensuremath{\motif{#1}_{|[-#2-#3;-#2]\times[#2;#2+#3]}},
\ensuremath{\motif{#1}_{|[#2;#2+#3]\times[-#2-#3;-#2]}},
\ensuremath{\motif{#1}_{|[#2;#2+#3]^2}}}
\newcommand\mgreen{green!75!black!50}
\newcommand\mblue{blue!100!black!20}
\begin{document}

\title[Computing (or not) Quasi-periodicity Functions of Tilings]{Computing (or not)\\ Quasi-periodicity Functions of Tilings}

\author
{A. Ballier}{Alexis Ballier}
\author
{E. Jeandel}{Emmanuel Jeandel}
\address
{Laboratoire d'Informatique Fondamentale de Marseille\\
CMI, 39 rue Joliot-Curie\\
F-13453 Marseille Cedex 13, France}
\email[A. Ballier]{alexis.ballier@lif.univ-mrs.fr}
\email[E. Jeandel]{emmanuel.jeandel@lif.univ-mrs.fr}

\thanks{Both authors are partly supported by ANR-09-BLAN-0164. A. Ballier has
been partly supported by the Academy of Finland project 131558. We thank Pierre
Guillon for discussions that lead to the constructions provided in
Section~\ref{sec:lowbound}.}

\begin{abstract}\noindent
We know that tilesets that can tile the plane always admit a quasi-periodic
tiling~\cite{birkmin,Durand99}, yet they hold many uncomputable
properties~\cite{berger,Hanf74,Myers74,simpsonmedv}.
The quasi-periodicity function is one way to measure the regularity of a
quasi-periodic tiling.
We prove that the tilings by a tileset that admits only quasi-periodic tilings
have a recursively (and uniformly) bounded quasi-periodicity function.
This corrects an error from~\cite[theorem~$9$]{CervelleD04} which stated the
contrary.
Instead we construct a tileset for which any quasi-periodic tiling has a
quasi-periodicity function that cannot be recursively bounded.
We provide such a construction for $1-$dimensional effective subshifts and
obtain as a corollary the result for tilings of the plane \latin{via} recent
links between these objects~\cite{mathieunathaliesoficeff,drseffsofic}.
\end{abstract}

\maketitle

Tilings of the discrete plane as studied nowadays have been introduced by Wang
in order to study the decidability of a subclass of first order
logic~\cite{wangpatternrecoI,wangpatternrecoII,BorgerGG1997}.
After Berger proved the undecidability of the domino problem~\cite{berger},
interest has grown for understanding how complex are these simply defined
objects~\cite{Hanf74,Myers74,complextilingsjsl,CervelleD04}.
Despite being able to have complex tilings, any tileset that can tile the plane
admits a \emph{quasi-periodic} tiling~\cite{birkmin,Durand99}; roughly speaking, a
quasi-periodic tiling is a tiling in which every finite pattern can be found in
any sufficiently large part of the tiling. It is therefore natural to define the
quasi-periodicity function of a quasi-periodic tiling: it associates to an
integer $n$ the minimal size in which we are certain to find any pattern of
size $n$~\cite{Durand99,CervelleD04}.
This is one way to measure the complexity of a quasi-periodic tiling and, to
some extent, of a tileset $\tau$ since $\tau$ must admit at least one quasi-periodic tiling.
We start by proving in Section~\ref{sec:bound} that tilings by tilesets that
admit only quasi-periodic tilings have a recursively (and uniformly) bounded
quasi-periodicity function (Theorem~\ref{thm:main}).
Remark that there exists non-trivial tilesets that admit only quasi-periodic
tilings~\cite{robinson,moz,Ollinger08} and that the property of having only such
tilings can be reduced to the domino problem~\cite{berger,robinson} and is thus
undecidable\footnote{Take a tileset $\tau_u$ that admits only one uniform
tiling (and thus only quasi-periodic tilings), a tileset $\tau_f$ that admits
non quasi-periodic tilings (\eg a fullshift on $\left\{0,1\right\}$) then it is
clear that $\left(\tau\times\tau_f\right)\cup\tau_u$ admits only quasi-periodic
tilings if and only if $\tau$ does not tile the plane.}.

With the aim to study discretization of dynamical systems, $1-$dimensional
subshifts have been extensively studied in symbolic
dynamics~\cite{SymbDyn,marcuslind}. Quasi-periodic tilings correspond to almost
periodic sequences~\cite{Hedlund69} or uniformly recurrent sequences in this
context.
Again, the existence of complex uniformly recurrent sequences has been
shown~\cite{apseq}.
In Section~\ref{sec:lowbound} we show, given a partial recursive function
$\varphi$, how to construct an effective subshift in which every uniformly
recurrent configuration has a quasi-periodicity function greater than $\varphi$
where it is defined (Theorem~\ref{thm:1deffnonbound}).
This allows us to correct the error from~\cite{CervelleD04} as we obtain as a
corollary (using recent links between tilings and effective $1-$dimensional
subshifts~\cite{mathieunathaliesoficeff,drseffsofic}) that there exists tilesets
for which no quasi-periodic tiling can have a quasi-periodicity function that is
recursively bounded (Theorem~\ref{thm:qpfuncunbound}).

\section{Definitions}

A \emph{configuration} is an element of $\espace$ where $\couleur{Q}$ is a
finite set or, equivalently, a mapping from $\plan$ to $\couleur{Q}$.
A \emph{pattern} $\motif{P}$ is a function from a finite domain
$\domain{P}\subseteq\plan$ to $\couleur{Q}$.
The \emph{shift} of vector $v$ ($v\in\plan$) is the function denoted by
$\sigma_v$ from $\espace$ to $\espace$ defined by $\shift{v}{c}(x) =
\conf{c}(v+x)$.
A pattern $\motif{P}$ \emph{appears} in a configuration $\conf{c}$ (denoted
$\motif{P}\in\conf{c}$) if there exists $v\in\plan$ such that
$\shift{v}{c}_{|\domain{P}}=\motif{P}$.
Similarly, we can define the shift of vector $v$ of a pattern $\motif{P}$ by
the function $\shift{v}{\motif{P}}(x)=\motif{P}(v+x)$; then we can say that a
pattern $\motif{P}$ appears in another pattern $\motif{M}$ if there exists
$v\in\plan$ such that $\shift{v}{\motif{M}}_{|\domain{P}}=\motif{P}$ and denote
it by $\motif{P}\in\motif{M}$. We use the same vocabulary and notations for both
notions of shift and appearance but there should not be any confusion since
configurations are always denoted by lower case letters and patterns by upper
case letters.

Given a finite set of colors $\couleur{Q}$, a \emph{tileset} is defined by a
finite set of patterns $\ForbPat$; we say that a configuration $\conf{c}$ is a
\emph{valid tiling} for $\ForbPat$ if none of the patterns of $\ForbPat$ appear
in $\conf{c}$.
We denote by $\Tilings$ the set of valid tilings for $\ForbPat$. If $\Tilings$
is non-empty we say that $\ForbPat$ can tile the plane.
A set of configurations $\sshift{T}$ is said to be a \emph{set of tilings} if
there exists some finite set of patterns $\ForbPat$ such that
$\sshift{T}=\Tilings$.
This notion of set of tilings corresponds to subshifts of finite
type~\cite{marcuslind,lind}. When we impose no restriction on $\ForbPat$ these
are subshifts and when $\ForbPat$ is recursively enumerable we say that
$\Tilings$ is an \emph{effective subshift} (see, \eg
\cite{compuniveffsymb,hochmanrecsshift,hochmanuniv,mathieunathaliesoficeff,drseffsofic}).

A periodic configuration $\conf{c}$ is a configuration such that the set
$\left\{\shift{v}{c}, v\in\plan\right\}$ is finite.
It is well known (since Berger~\cite{berger}) that there exists
tilesets that do not admit a periodic tiling but can still tile the plane.
On the other hand, quasi-periodicity is the correct regularity notion if we
always want a tiling with this property.
Periodic configurations are quasi-periodic but the converse is not true.
Several characterizations of quasi-periodic configurations
exist~\cite{Durand99}, we give one here that we use for the rest of the paper.

\begin{defi}[Quasi-periodic configuration]
A configuration $\conf{c}\in\espace$ is quasi-periodic if any pattern that
appears in $\conf{c}$ appears in any sufficiently large pattern of $\conf{c}$.

More formally, if a pattern $\motif{P}$ appears in $\conf{c}$ then there exists
$n\in\N$ such that for every pattern $\motif{M}$ defined on $[-n;n]^2$ that
appears in $\conf{c}$, $\motif{P}$ appears in $\motif{M}$.

We denote by $\qppat{P}{\conf{c}}$ the smallest such $n$ for finding a
pattern $\motif{P}$ in the quasi-periodic configuration $\conf{c}$.
\end{defi}

\begin{theorem}[\cite{birkmin,Durand99}]
Any non-empty set of tilings contains a quasi-periodic configuration.
\end{theorem}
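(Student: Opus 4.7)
The plan is to invoke the standard minimal subshift argument, combining Zorn's lemma on closed shift-invariant subsets with compactness of the configuration space. Since $\couleur{Q}$ is finite, $\espace$ is compact for the product topology (Tychonoff), and a set of tilings $\sshift{T}$ is a closed shift-invariant subset of it. First I would order the non-empty closed shift-invariant subsets of $\sshift{T}$ by reverse inclusion. Any descending chain has a non-empty intersection by the finite intersection property in the compact space $\espace$, and this intersection is again closed and shift-invariant. By Zorn's lemma, there exists a minimal non-empty closed shift-invariant subset $X\subseteq\sshift{T}$.

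Next I would pick any $\conf{c}\in X$ and show that it is quasi-periodic. Fix a pattern $\motif{P}$ appearing in $\conf{c}$ and consider
$$V_{\motif{P}} = \left\{\conf{c}'\in X : \motif{P}\in\conf{c}'\right\}.$$
The property $\motif{P}\in\conf{c}'$ is witnessed by finitely many coordinates (a shift vector $v$ and the values of $\conf{c}'$ on $v+\domain{P}$), so $V_{\motif{P}}$ is open in $X$, and it is clearly shift-invariant. Its complement in $X$ is then closed and shift-invariant; since $\conf{c}\in V_{\motif{P}}$, minimality of $X$ forces this complement to be empty. Hence every configuration in $X$ contains $\motif{P}$.

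The final step upgrades this to a uniform bound, which is where compactness enters a second time. For each $n\in\N$ the set
$$W_n = \left\{\conf{c}'\in X : \motif{P}\text{ appears in }\conf{c}'_{|[-n;n]^2}\right\}$$
is open in $X$, the sequence $(W_n)_n$ is increasing, and $\bigcup_n W_n = V_{\motif{P}} = X$. By compactness of $X$, there exists $n_0$ with $W_{n_0}=X$. Applying this to the shifted configurations $\shift{v}{c}\in X$ for arbitrary $v\in\plan$ yields that every pattern appearing in $\conf{c}$ on some box $v+[-n_0;n_0]^2$ already contains $\motif{P}$, which is exactly the defining property of quasi-periodicity.

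The argument is almost entirely bookkeeping; the only mildly delicate point is the second compactness step, where one must observe that the bound $n_0$ can be chosen uniformly over all $\conf{c}'\in X$ (and thus over all shifts of $\conf{c}$) rather than pointwise. The nonconstructive use of Zorn's lemma is what prevents $n_0$ from being controlled effectively in terms of $\motif{P}$; that is precisely the source of the computability questions studied in the sequel of the paper.
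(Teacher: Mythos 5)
Your proof is correct and is essentially the classical argument behind the cited result of Birkhoff and Durand: the paper gives no proof of its own (the theorem is quoted from \cite{birkmin,Durand99}), and the standard proof there is exactly your two-step scheme, namely extracting a minimal non-empty closed shift-invariant subset via Zorn's lemma and compactness, and then using minimality plus a second compactness argument to turn ``every configuration of $X$ contains $\motif{P}$'' into a uniform window size $n_0$. Nothing is missing; in particular you correctly handle the one delicate point, the uniformity of $n_0$ over all shifts of $\conf{c}$.
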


For an integer $n$, the set of patterns defined on a square domain $[-n;n]^2$ is
finite, it is therefore natural to define the quasi-periodicity function of a
quasi-periodic configuration.

\begin{defi}[Quasi-periodicity function]
The quasi-periodicity function of a quasi-periodic configuration $\conf{c}$,
denoted by $\qpfunc{c}$, is the function from $\N$ to $\N$ that maps a given integer
$n$ to the smallest integer $m$ such that any pattern of domain $[-n;n]^2$ that
appears in $\conf{c}$ appears in any pattern of $\conf{c}$ of domain $[-m;m]^2$.
\[
\qpfunc{c}(n) = \max \left\{\qppat{P}{\conf{c}}, \motif{P}\in\conf{c},\domain{P}=[-n;n]^2\right\}
\]
\end{defi}

The function $\qpfunc{c}$ measures in some sense the complexity of the
quasi-periodic configuration $\conf{c}$: the faster it grows, the more complex
$\conf{c}$ is.
Since one can construct tilesets whose tilings have many uncomputable
properties (\eg such that every tiling is uncomputable as a function from
$\plan$ to $\couleur{Q}$~\cite{Hanf74,Myers74} or such that every pattern that
appears in a tiling has maximal Kolmogorov complexity~\cite{complextilingsjsl}),
it is natural to expect the quasi-periodicity function to inherit the
non-recursive properties of tilings. This is what had been proved
in~\cite{CervelleD04}.

In some particular cases it is easy to prove that this function is actually
computable. Consider a tileset such that any pattern that appears in a tiling
appears in every tiling;
in that case every tiling is quasi-periodic and the
quasi-periodicity function is the same for every tiling. Moreover there exists
an algorithm that decides if a pattern can appear in a tiling or not (this has
been proven by different ways, either by considering the fact that the first
order theory of the tileset is finitely axiomatizable and complete therefore
decidable~\cite{jacmodth} or by using a direct compactness
argument~\cite{hochmanrecsshift}).
Given this algorithm, it is easy to compute
the quasi-periodicity function (that does not depend on the tiling): for a given
$p$, compute all the $[-p;p]^2$ patterns that appear in a tiling and then
compute all the $[-n;n]^2$ patterns for $n\geq p$ until every $[-p;p]^2$ pattern
appears in every $[-n;n]^2$ pattern and output the smallest such $n$.

In the remainder of this paper, we improve this technique to obtain a less
restrictive condition on the tileset while proving that the quasi-periodicity
function is recursively bounded:

\begin{theorem}
\label{thm:main}
If a tileset (defined by $\ForbPat)$ admits only quasi-periodic tilings then
there exists a computable function $q: \N\to\N$ such that for any tiling
$\conf{c}$ of $\Tilings$, $\conf{c}$ has a quasi-periodicity
function bounded by $q$, \ie $\forall \conf{c}\in\Tilings, \forall n\in\N,
\qpfunc{c}(n)\leq q(n)$.
\end{theorem}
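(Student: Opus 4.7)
The plan is to combine a topological compactness argument (to show that a uniform finite bound exists) with an effective procedure that extends the algorithm described in the introduction for the minimal case.

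First, I establish that $q^*(n) := \sup_{\conf{c}\in\Tilings}\qpfunc{\conf{c}}(n)$ is finite for every $n$. For a fixed pattern $\motif{P}$ of domain $[-n,n]^2$, the map $\conf{c}\mapsto\qppat{P}{\conf{c}}$ is upper semi-continuous on $\Tilings$: the sub-level set $\{\conf{c} : \qppat{P}{\conf{c}}\le m\}$ equals the intersection over $v\in\plan$ of the clopen sets ``$\motif{P}$ appears as a sub-pattern inside $\shift{v}{\conf{c}}|_{[-m,m]^2}$'' (each such set being a finite disjunction of cylinders), hence is closed. Restricting to the compact clopen set $F := \{\conf{c}\in\Tilings : \conf{c}|_{[-n,n]^2}=\motif{P}\}$ and using the hypothesis that every tiling is quasi-periodic, $\qppat{P}{\cdot}$ is finite-valued on $F$. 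An upper semi-continuous, finite-valued function on a compact space attains its supremum, and by shift-invariance this produces a finite bound on $\qppat{P}{\conf{c}}$ over all tilings $\conf{c}$ containing $\motif{P}$. Taking the maximum over the finite set of $[-n,n]^2$ patterns appearing in $\Tilings$ yields $q^*(n)<\infty$.

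Second, I extend the algorithm of the introduction to produce a computable upper bound on $q^*$. Given $n$, enumerate candidate values $m=n,n+1,\ldots$ and, for each bad pair of patterns $(\motif{P},\motif{M})$ (with $\motif{P}$ of domain $[-n,n]^2$, $\motif{M}$ of domain $[-m,m]^2$, both appearing in some tiling and $\motif{P}\notin\motif{M}$), dovetail two recursively enumerable searches: one looking for a finite extension $\motif{W}$ containing both $\motif{P}$ and $\motif{M}$ that is itself in the language of $\Tilings$ (a compatibility witness, showing $m<q^*(n)$), and the other looking for a finite certificate that no such extension exists. The finiteness established in the first step, combined with the effective structure of the subshift of finite type, guarantees that at $m=q^*(n)$ every bad pair admits a finite incompatibility certificate and the procedure terminates with a correct upper bound.

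The main obstacle is producing the finite incompatibility certificate: in general, the incompatibility of two patterns is only co-recursively enumerable and admits no finite witness. The quasi-periodicity hypothesis has to be leveraged carefully (this is the ``improvement of the technique'' announced in the introduction) to turn the compactness result of the first step into a terminating algorithm, effectively by combining the finiteness of the minimal components that the hypothesis forces on $\Tilings$ with the decidability of the language of $\Tilings$.
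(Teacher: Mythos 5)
There is a genuine gap, in fact two. First, the semicontinuity argument in your opening step is backwards: showing that the sub-level set $\left\{\conf{c} : \qppat{P}{\conf{c}}\le m\right\}$ is an intersection of clopen sets, hence closed, establishes \emph{lower} semi-continuity, not upper, and a finite-valued lower semi-continuous function on a compact space need not be bounded above (its sub-level sets form an increasing cover by \emph{closed} sets, from which compactness extracts no finite subcover). Indeed $\qppat{P}{\cdot}$ is genuinely not upper semi-continuous: a configuration agreeing with $\conf{c}$ on a huge central window may still have a far-away $[-m;m]^2$ block avoiding $\motif{P}$. The finiteness of $\sup_{\conf{c}}\qppat{P}{\conf{c}}$ is true (it is Lemma~\ref{lemma:globalbound}), but its proof must use quasi-periodicity of the \emph{limit} tiling: in each counterexample tiling one takes a maximal $\motif{P}$-free square, observes that a copy of $\motif{P}$ sits adjacent to it, recenters on that forced copy, and extracts by compactness a tiling containing $\motif{P}$ together with a $\motif{P}$-free quarter-plane, contradicting its quasi-periodicity. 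Without recentering on an occurrence of $\motif{P}$, the limit may simply not contain $\motif{P}$ and no contradiction arises; your argument never uses the hypothesis beyond finite-valuedness, so it cannot suffice.

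Second, your step two is not an algorithm as stated: it quantifies over patterns ``appearing in some tiling'' and searches for witnesses ``in the language of $\Tilings$'', both of which are only co-recursively enumerable (the paper stresses precisely that one cannot decide whether a pattern appears in a tiling), and the closing appeal to ``the decidability of the language of $\Tilings$'' and ``the finiteness of the minimal components'' is unsupported --- neither follows from the hypothesis. You have correctly identified the obstacle but not removed it. The missing idea is to replace the global condition by a purely local, decidable one (Lemma~\ref{lemma:localpattern}): for each candidate radius $i$, check whether every \emph{locally valid} pattern on $[-n-i;n+i]^2$ with $\motif{P_j}$ at its center contains $\motif{P_j}$ again in each of its four corner blocks, or whether no locally valid extension of that size exists at all (which certifies that $\motif{P_j}$ appears in no tiling). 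Each such check is a finite computation, compactness plus quasi-periodicity guarantee that some $i$ succeeds, and Lemma~\ref{lemma:ninfm}, namely $\qppat{P}{\ForbPat}\leq 2(\qppatloc{P}{\ForbPat}+p)$, converts this local radius into the desired bound on the quasi-periodicity function. That reduction from ``appears in every tiling containing $\motif{P}$'' to ``holds for every locally valid extension of a fixed size'' is exactly the improvement the paper announces, and it is what your proposal lacks.
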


Note that this result is contrary to a result in~\cite{CervelleD04} stating that
there exists tilesets admitting only quasi-periodic tilings with
quasi-periodicity functions with no computable upper bound.
There is indeed a mistake in \cite{CervelleD04} that will be examined later.

\section{Computable bound on the quasi-periodicity function}
\label{sec:bound}

In this section we consider a tileset defined by a finite set of forbidden
patterns $\ForbPat$ such that every tiling by $\ForbPat$ is quasi-periodic.
The only hypothesis we have is the following: For any tiling
$\conf{c}\in\Tilings$ and for any pattern $\motif{P}$ that appears in
$\conf{c}$, there exists an integer $\qppat{P}{\conf{c}}$ such that any
$[-\qppat{P}{\conf{c}},\qppat{P}{\conf{c}}]^2$ pattern that appears in
$\conf{c}$ contains $\motif{P}$.
In order to prove Theorem~\ref{thm:main}, we first have to prove that there
exists a bound that does not depend on the tiling:

\begin{lemma}
\label{lemma:globalbound}
If a tileset $\ForbPat$ admits only quasi-periodic tilings then, for any pattern
$\motif{P}$ that appears in some tiling of $\Tilings$, there exists an integer
$n$ such that any tiling that contains $\motif{P}$ also contains $\motif{P}$ in
all its $[-n;n]^2$ patterns.

We define $\qppat{P}{\ForbPat}$ to be the smallest integer with this property.
\end{lemma}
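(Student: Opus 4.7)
The plan is to argue by contradiction using compactness of $\espace$. Suppose the conclusion fails for a specific pattern $\motif{P}$: for every $n\in\N$ there is a tiling $\conf{c}_n\in\Tilings$ containing $\motif{P}$ and admitting some $[-n;n]^2$ sub-pattern that does not contain $\motif{P}$. After applying a shift I may assume $\conf{c}_n|_{[-n;n]^2}$ itself contains no occurrence of $\motif{P}$, while $\motif{P}$ still appears at some position $w_n\in\plan$ in $\conf{c}_n$ with $|w_n|>n$. Extracting a convergent subsequence $\conf{c}_{n_k}\to\conf{c}^{\star}\in\Tilings$, the usual finite-window argument (if $\motif{P}$ appeared at some $v$ in $\conf{c}^{\star}$, then for $k$ large one has $v+\domain{P}\subseteq[-n_k;n_k]^2$ and $\conf{c}_{n_k}$ agrees with $\conf{c}^{\star}$ on that square, contradicting the choice of $\conf{c}_{n_k}$) shows that $\motif{P}$ does not appear anywhere in $\conf{c}^{\star}$. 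By the hypothesis on $\ForbPat$, the tiling $\conf{c}^{\star}$ is itself quasi-periodic and its orbit closure is a minimal subshift of $\Tilings$ whose entire language avoids $\motif{P}$.

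To close the contradiction I will apply a second compactness step to the shifted tilings $\shift{w_{n_k}}{\conf{c}_{n_k}}$, which all lie in the closed set $T_{\motif{P}}=\{\conf{c}\in\Tilings:\conf{c}|_{\domain{P}}=\motif{P}\}$. A further limit $\tilde{\conf{c}}^{\star}\in T_{\motif{P}}$ is a second quasi-periodic tiling, this time containing $\motif{P}$ at the origin. I then want to compare the two minimal orbit closures and show that every sub-pattern $\motif{Q}=\conf{c}^{\star}|_{[-r;r]^2}$ of $\conf{c}^{\star}$ also appears in $\tilde{\conf{c}}^{\star}$: the pattern $\motif{Q}$ occurs at the far-away position $-w_{n_k}$ in $\shift{w_{n_k}}{\conf{c}_{n_k}}$, and the quasi-periodicity of that tiling forces $\motif{Q}$ to reappear in every window centred at the origin of size at least some $M_k$; if the $M_k$ stay uniformly bounded in $k$ then $\motif{Q}$ survives in the limit. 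Doing this for every $r$ yields $\mathcal{L}(\conf{c}^{\star})\subseteq\mathcal{L}(\tilde{\conf{c}}^{\star})$; since both orbit closures are minimal subshifts this inclusion forces them to coincide, contradicting $\motif{P}\notin\conf{c}^{\star}$ while $\motif{P}\in\tilde{\conf{c}}^{\star}$.

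The step I expect to be the main obstacle is precisely controlling these $M_k$ uniformly in $k$: a failure of the uniform bound is essentially a failure of the same lemma for the smaller pattern $\motif{Q}$, which suggests either a clean descent or diagonal argument on pattern size, or, alternatively, a preliminary proof that an SFT admitting only quasi-periodic tilings decomposes into finitely many minimal subshifts---in which case the uniform $n$ of the lemma is simply the maximum, over the finitely many minimal subshifts of $\Tilings$ that contain $\motif{P}$, of their common quasi-periodicity index for $\motif{P}$. Either route relies crucially on the finiteness of the alphabet $\couleur{Q}$ and on the SFT structure of $\Tilings$, and it is this combinatorial finiteness, rather than the compactness extraction itself, that I expect to be the technical heart of the argument.
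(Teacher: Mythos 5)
Your opening compactness extraction is fine, but the argument has a genuine gap exactly where you flag it, and neither of your proposed repairs closes it. The two-limit strategy produces one tiling $\conf{c}^{\star}$ avoiding $\motif{P}$ and a second one, your $\tilde{\conf{c}}^{\star}$, containing it, and then needs $\mathcal{L}(\conf{c}^{\star})\subseteq\mathcal{L}(\tilde{\conf{c}}^{\star})$. The uniform bound $M_k$ this requires is precisely an instance of the lemma being proved, applied to the patterns $\motif{Q}=\conf{c}^{\star}|_{[-r;r]^2}$; since $r$ is arbitrary, these $\motif{Q}$ are in general \emph{larger} than $\motif{P}$, so no descent or induction on pattern size is available. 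The alternative repair --- that a tileset admitting only quasi-periodic tilings decomposes into \emph{finitely many} minimal subshifts --- is a substantially stronger structural statement than the lemma itself; it is not proved in the paper, you give no argument for it, and it cannot simply be invoked.

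The idea you are missing is to keep the occurrence of $\motif{P}$ and the large $\motif{P}$-free region at \emph{bounded distance from each other before} taking any limit, so that a single compactness extraction witnesses both properties in the same configuration. The paper does this as follows: in $\conf{c}_n$, take $\motif{O}_n$ to be a \emph{maximal} square pattern not containing $\motif{P}$ (it exists because $\conf{c}_n$ is quasi-periodic and contains $\motif{P}$, and it has side at least that of $[-n;n]^2$). Maximality forces an occurrence of $\motif{P}$ adjacent to $\motif{O}_n$. Recenter at that adjacent occurrence; then a portion of $\motif{O}_n$ whose size grows with $n$ lies in one of the four quarter-planes based at the origin, the same quarter for infinitely many $n$. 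The limit along that subsequence is a single tiling of $\Tilings$ containing $\motif{P}$ at the origin together with arbitrarily large $\motif{P}$-free squares in a fixed quarter-plane, hence not quasi-periodic, contradicting the hypothesis on $\ForbPat$. No comparison of two minimal systems, and no uniform recurrence estimate, is needed.
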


Remark that the converse of this lemma is obviously true by definition: if for
any pattern there exists such an integer then all the tilings are
quasi-periodic.

\begin{proof}
Suppose this is not true: there exists a pattern $\motif{P}$ and a sequence
$(\conf{c}_n)_{n\in\N}$ of configurations that contain $\motif{P}$ and such that
$\conf{c}_n$ also contains a $[-n;n]^2$ pattern that does not contain
$\motif{P}$.

For a given $n$, consider $\motif{O}_n$, one of the largest square patterns of
$\conf{c}_n$ that does not contain $\motif{P}$. Since $\conf{c}_n$ is
quasi-periodic and contains $\motif{P}$ by hypothesis, there does not exist
arbitrary large square patterns that do not contain $\motif{P}$ and thus
$\motif{O}_n$ is well defined. Note that $\motif{O}_n$ is defined on at least
$[-n;n]^2$.
Since we supposed $\motif{O}_n$ of maximal size, there must be a pattern
$\motif{P}$ adjacent to it like depicted on Figure~\ref{fig:on}.

\begin{figure}[htb]
\begin{center}
\opt{dessinetikz}{\begin{tikzpicture}[scale=.4]

\filldraw[fill=\mblue] (0,3) rectangle +(5,5);
\filldraw[fill=\mgreen,draw=red, line width=3pt] (5,5) rectangle +(10,10);
\draw (0,0) grid +(15,15);
\node at (9.5,9.5) {$\motif{O}_n$};
\node at (2.5,5.5) {$\motif{P}$};
\end{tikzpicture}}\opt{pdftikz}{\includegraphics{dessins/on.pdf}}
\end{center}
\caption{$\motif{O}_n$ near $\motif{P}$.}
\label{fig:on}
\end{figure}

Now if we center our view on this $\motif{P}$ adjacent to $\motif{O}_n$, for
infinitely many $n$'s the largest part of $\motif{O}_n$ always
appears in the same quarter of plane (with origin $\motif{P}$).
Since $\motif{O}_n$ is defined on at least $[-n;n]^2$, by compactness we obtain
a tiling with $\motif{P}$ at its center and a quarter of plane without
$\motif{P}$. Such a tiling cannot be quasi-periodic.
\end{proof}

Lemma~\ref{lemma:globalbound} shows that if all the tilings that are valid for
$\ForbPat$ are quasi-periodic then there exists a global bound on the
quasi-periodicity function of any tiling: define $f(n)=\max\left\{\qppat{P}{\ForbPat},
\domain{P}=[-n;n]^2, \motif{P}\textrm{ appears in a tiling by }
\ForbPat\right\}$; for any tiling $\conf{c}\in\Tilings$ and any integer $n$, we
have $\qpfunc{c}(n)\leq f(n)$. The only part left in the proof of
Theorem~\ref{thm:main} is to prove that $f$ is computably bounded.

In a quasi-periodic tiling, if a pattern $\motif{P}$ defined on $[-n;n]^2$
appears in it then it must appear close to $\motif{P}$ (at distance less than
$f(n)+n$) in each of the four quarters of plane starting from the corners of
$\motif{P}$.
In general, we cannot compute whether a pattern will appear in some tiling or
not, however, we can compute whether a pattern is valid with respect to
$\ForbPat$.

\begin{lemma}
\label{lemma:localpattern}
If a tileset $\ForbPat$ admits only quasi-periodic tilings then, for any pattern
$\motif{P}$ defined on $[-n;n]^2$ that appears in some tiling of $\Tilings$,
there exists an integer $m$ such that any pattern $\motif{R}$ defined on
$[-n-m;n+m]^2$ that is valid with respect to $\ForbPat$ and contains
$\motif{P}$ at its center (\ie $\motif{R}_{|[-n;n]^2}=\motif{P}$) is such that
the four patterns \fourpat{R}{n}{m} all contain $\motif{P}$.

We define $\qppatloc{P}{\ForbPat}$ to be the smallest integer $m$ with this property.
\end{lemma}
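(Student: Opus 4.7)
The plan is to argue by contradiction in the spirit of the proof of Lemma~\ref{lemma:globalbound}, but starting from locally valid finite patterns rather than full tilings, and then reconstructing a full tiling via compactness. Suppose no such $m$ exists: then for each $m\in\N$ there is a pattern $\motif{R}_m$ on $[-n-m;n+m]^2$, valid with respect to $\ForbPat$ and agreeing with $\motif{P}$ on the central $[-n;n]^2$, yet at least one of its four corner sub-patterns contains no occurrence of $\motif{P}$. By pigeonhole on the four corners, I may pass to an infinite sub-sequence for which the same corner -- say the upper-right one, anchored at the upper-right corner of the central copy of $\motif{P}$ -- always fails.

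Next I take a pointwise limit: each $\motif{R}_m$ is a partial function on $\plan$ whose domain exhausts $\plan$ as $m\to\infty$, so a standard diagonal extraction yields a configuration $\conf{c}\in\espace$ that agrees with some cofinal sub-sequence on every finite region. Because $\ForbPat$ is finite and each $\motif{R}_m$ avoids it, the limit $\conf{c}$ also avoids $\ForbPat$; hence $\conf{c}\in\Tilings$, and by construction $\conf{c}$ contains $\motif{P}$ at the origin. Since the centre of every $\motif{R}_m$ is pinned at the origin and the same upper-right corner always fails, the entire upper-right quadrant of $\conf{c}$ (anchored at the upper-right corner of that central copy of $\motif{P}$) contains no occurrence of $\motif{P}$.

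This contradicts Lemma~\ref{lemma:globalbound}: for every $k\in\N$ one can find a $[-k;k]^2$ pattern of $\conf{c}$ centred deep inside that quadrant containing no copy of $\motif{P}$, yet $\conf{c}\in\Tilings$ contains $\motif{P}$, so Lemma~\ref{lemma:globalbound} forces every $[-\qppat{P}{\ForbPat};\qppat{P}{\ForbPat}]^2$ pattern of $\conf{c}$ to contain $\motif{P}$, a contradiction. The main obstacle I anticipate is the geometric bookkeeping in the compactness step -- checking that the ``upper-right corner sub-pattern'' of $\motif{R}_m$ really does monotonically exhaust the upper-right quadrant in the limit, and that the four corner sub-patterns in $\fourpat{R}{n}{m}$ grow large enough with $m$ to cover arbitrary squares centred in the corresponding quadrant -- but since $\motif{P}$ is kept pinned at the origin of each $\motif{R}_m$, this reduces to a direct inspection of the geometry of $\fourpat{R}{n}{m}$.
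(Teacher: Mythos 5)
Your proposal is correct and follows essentially the same route as the paper's proof: assume no such $m$ exists, use pigeonhole to fix one failing corner, extract a limit tiling by compactness with $\motif{P}$ pinned at the centre, and derive a tiling containing $\motif{P}$ together with a quarter of plane avoiding $\motif{P}$. The only cosmetic difference is that you phrase the final contradiction via Lemma~\ref{lemma:globalbound} while the paper contradicts the quasi-periodicity of the limit tiling directly; both are valid.
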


Those four patterns may seem obscure at a first read, they are depicted on
Figure~\ref{fig:4pat}.

\begin{figure}[htb]
\begin{center}
\opt{dessinetikz}{\begin{tikzpicture}[scale=.3]

\filldraw[fill=\mblue] (8,8) rectangle +(4,4);
\filldraw[fill=\mgreen] (0,0) rectangle +(8,8);
\filldraw[fill=\mgreen] (0,12) rectangle +(8,8);
\filldraw[fill=\mgreen] (12,0) rectangle +(8,8);
\filldraw[fill=\mgreen] (12,12) rectangle +(8,8);

\draw (0,0) grid +(20,20);

\draw[<->] (21,12) -- node[right] {$m$} (21,20);
\draw[<->] (12,21) -- node[above] {$m$} (20,21);
\node at (10,10) {$\motif{P}$};

\node (text) at (-10,10) {$\motif{P}$ appears somewhere here};
\draw[->] (text) -- (16,16);
\draw[->] (text) -- (4,16);
\draw[->] (text) -- (4,4);
\draw[->] (text) -- (16,4);

\end{tikzpicture}}\opt{pdftikz}{\includegraphics{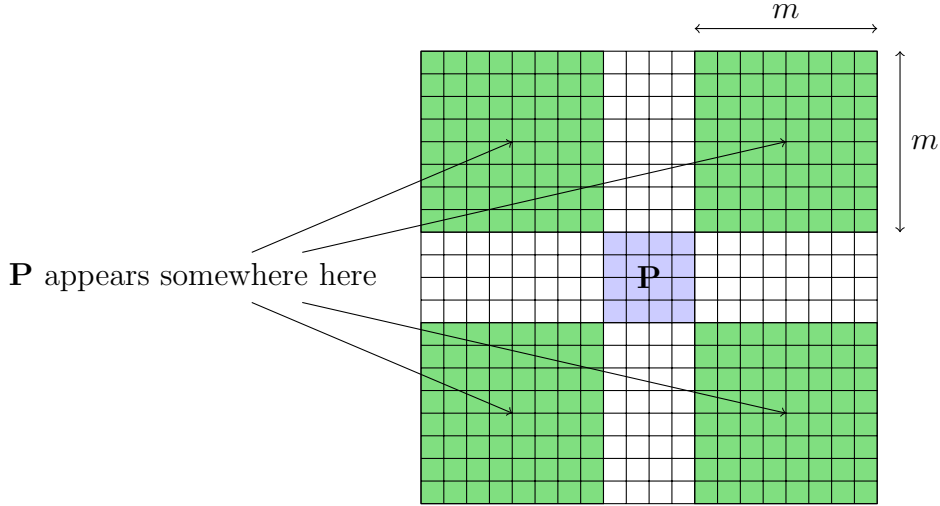}}
\end{center}
\caption{The four patterns in which we must find another occurrence of $\motif{P}$.}
\label{fig:4pat}
\end{figure}

\begin{proof}
For a given pattern $\motif{P}$, suppose that there exists no such
$m$. This means that there exist arbitrarily large $m$ and
valid patterns $\motif{R_m}$ (defined on $[-n-m;n+m]^2$) such that one of the
four patterns \fourpat{R_m}{n}{m} does not contain $\motif{P}$.

Without loss of generality, we can assume that this always happens in the same
quarter of plane.
By extracting a tiling centered on the pattern $\motif{P}$ at the center of
$\motif{R_m}$ (which we can do by compactness), there exists a tiling $\conf{c}$
of $\Tilings$ that contains $\motif{P}$ and a quarter of plane without
$\motif{P}$, contradicting the quasi-periodicity of $\conf{c}$.
\end{proof}

Note that the converse of Lemma~\ref{lemma:localpattern} is also true: if,
for any pattern $\motif{P}$, there exists such an $\qppatloc{P}{\ForbPat}$ then
all the tilings of $\Tilings$ are quasi-periodic.

\begin{lemma}
\label{lemma:ninfm}
If $\ForbPat$ is a tileset that allows only quasi-periodic tilings then, for any
pattern $\motif{P}$ defined on $[-p;p]^2$ that appears in some tiling of
$\Tilings$, we have:
\[
\qppat{P}{\ForbPat}\leq 2(\qppatloc{P}{\ForbPat} + p)
\]
\end{lemma}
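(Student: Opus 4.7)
The plan is to unpack $\qppat{P}{\ForbPat}\leq 2(m+p)$, writing $m=\qppatloc{P}{\ForbPat}$ for brevity, as the concrete assertion that in any $\conf{c}\in\Tilings$ containing $\motif{P}$, every $v\in\plan$ admits an occurrence of $\motif{P}$ centred at some $u$ with $|u-v|_\infty \leq 2(m+p)-p = 2m+p$; such a $u$ places the shifted domain $u+[-p,p]^2$ entirely inside the window $v+[-2(m+p),2(m+p)]^2$, which is exactly what the definition of $\qppat{P}{\ForbPat}$ demands.

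To establish this reformulation I would fix such $\conf{c}$ and $v$, and pick $u^*$ to be a $\motif{P}$-occurrence in $\conf{c}$ minimising the $L^\infty$-distance $D$ to $v$ (the minimiser exists because the occurrences form a nonempty discrete subset of $\plan$). The argument proceeds by contradiction: assuming $D>2m+p$, apply Lemma~\ref{lemma:localpattern} to the valid pattern $\conf{c}|_{u^*+[-p-m,p+m]^2}$, which has $\motif{P}$ at its centre. The lemma returns four new $\motif{P}$-occurrences, one inside each of the four corner sub-patterns around $u^*$; each new centre has the form $u^*+w$ with $w$ sitting in a distinct quadrant and $|w|_\infty\leq m+p$. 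Choosing the $w$ whose coordinate signs match those of $v-u^*$, the step from $u^*$ to $u^*+w$ pushes each coordinate toward $v$, and a short coordinate-wise case analysis yields $|(u^*+w)-v|_\infty<D$, contradicting the minimality of $u^*$.

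The main obstacle I expect is the coordinate-wise case analysis in the last step. One must verify, for every sign configuration of $v-u^*$ and the matching quadrant choice of $w$, that both coordinates of $v-u^*-w$ have absolute value strictly less than $D$. The generous threshold $2m+p$ in the contradiction hypothesis is what makes this painless: whenever a coordinate of $v-u^*$ already exceeds $m+p$, the matching step strictly reduces it, and whenever it is smaller, the bound $|w|_\infty\leq m+p<D$ keeps the new value bounded above by $m+p$, so a potential overshoot still cannot re-inflate the $L^\infty$-distance to $D$. Once this case check is in place, the contradiction with the minimality of $u^*$ is immediate and the lemma follows.
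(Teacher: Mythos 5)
Your proof is correct, and it reaches the bound by a genuinely different route from the paper's. The paper argues by contradiction on a \emph{maximal} square pattern $\motif{O}$ avoiding $\motif{P}$: maximality forces a copy of $\motif{P}$ adjacent to $\motif{O}$, and applying Lemma~\ref{lemma:localpattern} to the window around that copy places one of the four corner sub-patterns inside $\motif{O}$, so that corner cannot contain $\motif{P}$. You instead dualize the statement into ``every $v\in\plan$ has an occurrence of $\motif{P}$ within $L^\infty$-distance $2m+p$'' (writing $m=\qppatloc{P}{\ForbPat}$) and run a descent on the nearest occurrence: if it were farther, the corner of Lemma~\ref{lemma:localpattern} pointing towards $v$ would yield a strictly nearer one. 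Both proofs share the same engine --- Lemma~\ref{lemma:localpattern} applied to the window around an extremal occurrence --- but your extremal principle is cheaper to justify (a minimiser over a non-empty subset of $\plan$ trivially exists, whereas the paper's maximal $\motif{O}$ needs an extra appeal to quasi-periodicity to rule out arbitrarily large $\motif{P}$-free squares), and your argument is fully quantitative where the paper's is pictorial. The price is the coordinate-wise case analysis, which does check out: with $D>2m+p$ and a matching-quadrant $w$ of $L^\infty$-norm at most $m+p$, a coordinate of $v-u^*$ of absolute value exceeding $m+p$ strictly decreases without changing sign, and one of absolute value at most $m+p$ stays of absolute value at most $m+p<D$. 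The one caveat, inherited from the paper's conventions rather than introduced by you, is the degenerate case $p=0$: the corner sub-patterns as defined contain the centre cell, so the guaranteed occurrence could be the central one ($w=0$) and the descent stalls; for $p\geq 1$ the corner constraint forces each coordinate of $w$ to have absolute value at least $2p>0$, and your strict decrease is safe.
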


\begin{proof}
Let $\conf{c}$ be a (quasi-periodic) tiling of $\Tilings$ that contains
$\motif{P}$ and a pattern $\motif{O}$ defined on $[-k;k]^2$ that does not
contain $\motif{P}$ with $k > 2(\qppatloc{P}{\ForbPat} + p)$.
Without loss of generality, we may assume that $\motif{O}$ is of maximal size.
That is, there is a pattern $\motif{P}$ adjacent to $\motif{O}$.
Let $\motif{R}$ be the pattern defined on
$[-p-\qppatloc{P}{\ForbPat};p+\qppatloc{P}{\ForbPat}]^2$ centered on the pattern
$\motif{P}$ adjacent to $\motif{O}$ in $\conf{c}$.
Since $k > 2(\qppatloc{P}{\ForbPat} + p)$ and $\motif{O}$ does not contain
$\motif{P}$, at least one of the four patterns
\fourpat{R}{p}{\qppatloc{P}{\ForbPat}} does not contain $\motif{P}$ as depicted on
Figure~\ref{fig:2mplusn}; since
$\motif{R}$ is a valid pattern with respect to $\ForbPat$, this contradicts the
definition of $\qppatloc{P}{\ForbPat}$.
\end{proof}

\begin{figure}[htb]
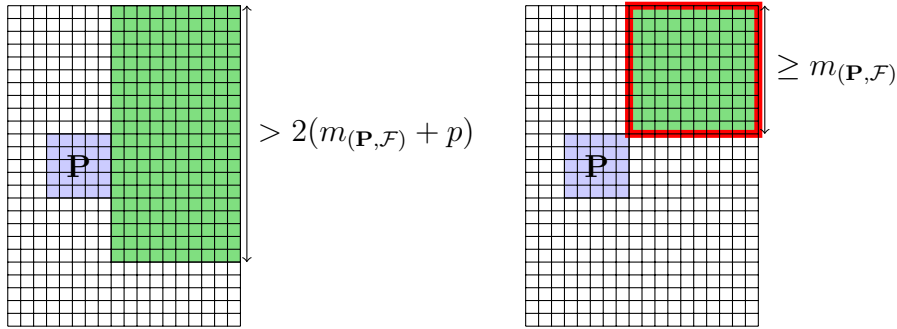

\begin{center}
\begin{tabular}{cc}
\opt{dessinetikz}{\begin{tikzpicture}[scale=.17]
\filldraw[fill=white,draw=white] (-4,-10) rectangle (20,16);
\filldraw[fill=\mblue] (0,0) rectangle +(5,5);
\filldraw[fill=\mgreen] (5,-5) rectangle +(10,20);
\draw (-3,-10) grid (15,15);
\node at (2.5,2.5) {$\motif{P}$};
\draw[<->] (15.5,-5) -- node[right] {$> 2(\qppatloc{P}{\ForbPat} + p)$} (15.5,15);
\end{tikzpicture}}\opt{pdftikz}{\includegraphics{dessins/2mplusn.pdf}}
&
\opt{dessinetikz}{\begin{tikzpicture}[scale=.17]

\filldraw[fill=white,draw=white] (-4,-10) rectangle (20,16);
\filldraw[fill=\mblue] (0,0) rectangle +(5,5);
\filldraw[fill=\mgreen,draw=red, line width=3pt] (5,5) rectangle +(10,10);
\draw (-3,-10) grid (15,15);
\node at (2.5,2.5) {$\motif{P}$};
\draw[<->] (15.5,5) -- node[right] {$\geq \qppatloc{P}{\ForbPat}$} (15.5,15);
\end{tikzpicture}}\opt{pdftikz}{\includegraphics{dessins/2mplusnimpos.pdf}}
\\
\end{tabular}
\end{center}
\caption{Bounding the size of the patterns not containing $\motif{P}$.}
\label{fig:2mplusn}
\end{figure}

Now that we have a bound that deals only about locally valid patterns instead of
patterns that appear in tilings (and therefore is computably checkable), we can
proceed to the proof of Theorem~\ref{thm:main}:

\begin{proof}[Proof of Theorem~\ref{thm:main}]
$\ForbPat$ is a tileset that admits only quasi-periodic tilings.
For an integer $n$, compute all the patterns $\motif{P_1},\ldots,\motif{P_k}$
defined on $[-n;n]^2$ that are valid for $\ForbPat$.

For each of these $\motif{P_j}$ use the following algorithm:
For each integer $i$, compute the set $\motif{R_1},\ldots,\motif{R_p}$ of
patterns defined on $[-i-n;i+n]^2$ that contain $\motif{P_j}$ at their center
and are valid with respect to $\ForbPat$.

\begin{enumerate}
\item If there is no such pattern $\motif{R}$, 
  claim that $\motif{P_j}$ cannot appear in any tiling by $\ForbPat$, and 
  define \eg $b_{\motif{P_j}}=0$. Then continue with $\motif{P_{j+1}}$
\label{enum:pasposs}
\item If all these patterns $\motif{R}$ restricted to either $[-n-i;-n]^2$,
$[-n-i;-n]\times[n;n+i]$, $[n;n+i]\times[-n-i;n]$ or $[n;n+i]^2$ all contain
$\motif{P}$ then define $b_{\motif{P_j}}=2(i+n)$ and continue with
$\motif{P_{j+1}}$\footnote{Remark that these patterns are exactly those depicted
in Figure~\ref{fig:4pat}.}.
\label{enum:upbound}
\end{enumerate}

For any pattern, one of these cases always happens: If $\motif{P_j}$ appears in
at least one tiling of $\Tilings$ then, by Lemma~\ref{lemma:localpattern}, for
$i=\qppatloc{P_j}{\ForbPat}$ we are in case~\ref{enum:upbound}. If $\motif{P_j}$
does not appear in any tiling of $\Tilings$ then case~\ref{enum:pasposs} must
happen, otherwise we would have arbitrary large extensions of $\motif{P_j}$ and
hence a tiling containing $\motif{P_j}$ by compactness. Note that we may halt in
case~\ref{enum:upbound} even if $\motif{P_j}$ does not appear in any tiling.

Now compute $q(n)=\max\left\{b_{\motif{P_j}}, \domain{P_j}=[-n;n]^2\right\}$.

For any tiling $\conf{c}\in\Tilings$ and any pattern $\motif{P}$ defined on
$[-n;n]^2$ that appears in $\conf{c}$ we have:
\[
\begin{array}{rcll}
\qppat{\motif{P}}{\conf{c}} & \leq & \qppat{\motif{P}}{\ForbPat} & \textrm{ by
definition of } \qppat{\motif{P}}{\ForbPat} \\
& \leq & 2(\qppatloc{\motif{P}}{\ForbPat} + n) & \textrm{ by Lemma~\ref{lemma:ninfm}}\\
& \leq & b_{\motif{P}}& \textrm{ by minimality of } \qppatloc{\motif{P}}{\ForbPat}\\
& \leq & q(n) & \textrm{ by definition of } q\\
\end{array}
\]

Therefore, for any configuration $\conf{c}$ and any integer $n$, we have 
$\qpfunc{c}(n)\leq q(n)$ and $q$ is the computable function that completes the
proof of Theorem~\ref{thm:main}.
\end{proof}

We remark that all the arguments used in the proofs of the lemmas 
involve only compactness and the fact that we can decide if a given pattern is
valid for $\ForbPat$. Hence, we may remove some restrictions on $\ForbPat$: $\Tilings$
is still compact if $\ForbPat$ is infinite and
we can still decide if a given pattern is valid for $\ForbPat$ when $\ForbPat$
is recursive.
Moreover, if $\ForbPat$ is recursively enumerable then there exists a recursive
set of patterns $\ForbPat'$ such that $\Tilings=\Tilings[\ForbPat']$: consider
the (computable) enumeration $f(0),f(1),\ldots$ of $\ForbPat$; when enumerating
$f(i)$, we can compute an integer $n$ such that all the previously enumerated
patterns are defined on a domain included in $[-n;n]^2$; then we enumerate all
the extensions of $f(i)$ defined on $[-n-1;n+1]^2 \cup \domain{f(i)}$.
This enumeration enumerates a new set of patterns $\ForbPat'$ that is now
recursive since they are enumerated by increasing sizes. It is straightforward
that $\Tilings=\Tilings[\ForbPat']$.
We conclude that $\ForbPat$ needs not to be finite in order for
Theorem~\ref{thm:main} to be valid but we may assume that it is only recursively
enumerable.
Sets of tilings with a recursively enumerable set of forbidden patterns are
usually called \emph{effective subshifts} in the
literature~\cite{compuniveffsymb,hochmanrecsshift,hochmanuniv,mathieunathaliesoficeff,drseffsofic}
or also $\Pi_1^0$ subshifts~\cite{simpsonmedv,millereffsub} and are a
special case of effectively closed sets as studied in computable analysis (see
\eg \cite{companalbook})\footnote{The definitions are usually given in dimension
one, \ie for (bi-)infinite, words even though they are the same for
multi-dimensional configurations.}.

\section{Large quasi-periodicity functions}
\label{sec:lowbound}

In this section we prove that we can construct tilesets whose every
quasi-periodic tiling has a large quasi-periodicity function.
We start from a 1-dimensional effective subshift
$\sshift{X}$ over an alphabet $\Sigma$ and then build an effective subshift
over the alphabet $\Sigma \times \{0,1\}$, and the complexity of the
quasi-periodicity function will come from the top layer. For this,
consider all occurrences of a word $u$ in the subshift $\sshift{X}$. There
are infinitely many of them, so the top layer restricted to occurrences of $u$ 
will contain a bi-infinite word over $\{0,1\}$.
If we can find infinitely many words in the subshift $\sshift{X}$ so that
occurrences of different words do not somehow overlap in a configuration $c$,
then this would give us an infinite number of bi-infinite words within
a single configuration $c$, in which we could code something.

The following lemma tells us how to find such words in the general case of minimal
effective subshifts; a \emph{minimal subshift} is a subshift in which every
pattern that appears in a configuration appears in every configuration, or
equivalently, a subshift that does not admit a proper non-empty subshift.
In this case, all configurations are of course quasi-periodic.

\begin{lemma}
\label{lemma:sequnweak}
For any (non-empty) $1-$dimensional minimal effective subshift
$\sshift{X}\subseteq\Sigma^{\Z}$ that has no periodic
configuration there exists a computable sequence $(u_n)_{n\in\N}$ of words in
the language of $\sshift{X}$ such that no $u_n$ is prefix of another one.
\end{lemma}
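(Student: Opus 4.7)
The plan is to build $(u_n)$ iteratively by a branching procedure. Write $L(\sshift{X})$ for the language of $\sshift{X}$. I maintain a ``spine'' word $v_n \in L(\sshift{X})$; at each step I find two distinct letters $a_n \neq b_n$ in $\Sigma$ such that both $v_n a_n$ and $v_n b_n$ lie in $L(\sshift{X})$, output $u_n = v_n a_n$, and set $v_{n+1} = v_n b_n$. For any $n < m$ the word $v_m$ extends $v_n b_n$, so $u_n$ and $u_m$ agree on their first $|v_n|$ positions and disagree at position $|v_n|$ (where $u_n$ reads $a_n$ and $u_m$ reads $b_n$). Hence no $u_n$ is a prefix of another, yielding the required antichain.

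Two things remain to justify. First, for any $v \in L(\sshift{X})$ there exist two distinct right-extensions in $L(\sshift{X})$. Otherwise, iterating the unique right-extension rule determines a single right-infinite sequence $r$ extending $v$. By minimality, some configuration $c \in \sshift{X}$ contains $v$ at two distinct positions $0 < k$; both right-infinite suffixes from these positions are right-extensions of $v$ appearing in a configuration, so both equal $r$. This gives $r_{j+k} = r_j$ for all $j \geq 0$, i.e.\ $r$ is purely periodic with period $k$. Every finite factor of the bi-infinite $k$-periodic word built from one period of $r$ is then a factor of $r$, hence in $L(\sshift{X})$, so this periodic configuration belongs to $\sshift{X}$ by the standard characterization of subshifts, contradicting the hypothesis that $\sshift{X}$ has no periodic configuration.

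The main obstacle, and where effectiveness is really used, is showing that $L(\sshift{X})$ is decidable so that the above searches actually terminate. Co-recursive enumerability of $L(\sshift{X})$ holds for any effective subshift: $w \notin L(\sshift{X})$ iff, at some level, every extension of $w$ up to that length is blocked by an enumerated forbidden pattern, and this is detectable by running the enumeration. For recursive enumerability I use minimality: if $w \in L(\sshift{X})$, uniform recurrence yields $N$ such that every length-$N$ word in $L(\sshift{X})$ contains $w$. Enumerating forbidden patterns produces at each stage $s$ an over-approximation $L_{N,s} \supseteq L(\sshift{X}) \cap \Sigma^N$ that decreases to $L(\sshift{X}) \cap \Sigma^N$, so at some pair $(N,s)$ every word of $L_{N,s}$ contains $w$; conversely, once such an event is detected, the non-empty set $L(\sshift{X}) \cap \Sigma^N$ is contained in $L_{N,s}$, forcing $w \in L(\sshift{X})$. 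Combining everything, I choose $v_0 \in L(\sshift{X})$ by enumerating words of increasing length and applying the decision procedure, then iterate the branching step; the resulting sequence $(u_n)$ is uniformly computable.
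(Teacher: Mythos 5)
Your overall strategy is the same as the paper's: maintain a spine $v_n$, branch it into two prefix-incomparable words, keep one as $u_n$ and make the other the new spine. The paper's proof is terser; you add two things it leaves implicit, and both additions are correct and welcome: the derivation of a periodic configuration from a word with a unique right-infinite extension (using minimality to find two occurrences of $v$ in one configuration and concluding that the unique extension is periodic), and the proof that the language $L(\sshift{X})$ is decidable (co-r.e.\ by compactness for any effective subshift, r.e.\ by uniform recurrence for a minimal one) --- which is what actually makes the searches terminate and the sequence computable.

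There is, however, a mismatch between the algorithm you describe and the claim you justify, and as literally stated the algorithm can fail to terminate. You ask at each step for two distinct \emph{letters} $a_n\neq b_n$ with $v_na_n,\,v_nb_n\in L(\sshift{X})$, i.e.\ you require the spine $v_n$ itself to be right-special. A minimal aperiodic subshift need not have this property for every word: in a Sturmian-type subshift exactly one word of each length is right-special, so a generic $v_n$ has a unique one-letter extension and your search for $a_n,b_n$ loops forever. What your justification actually establishes (correctly) is the weaker statement that $v_n$ admits two distinct right-extensions of some common length; the branching may occur strictly to the right of the end of $v_n$. The fix is exactly what the paper does: search --- termination being guaranteed by your own decidability argument --- for two distinct words $w\neq w'$ of the same length in $L(\sshift{X})$ both extending $v_n$, and set $u_n=w$, $v_{n+1}=w'$; these are still prefix-incomparable with each other and with all earlier $u_k$ since they agree on the prefix $v_n$ and differ somewhere afterwards. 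With that one-line repair your proof is complete.
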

\begin{proof}
We build recursively a sequence $(u_0,\ldots, u_n)$ and a word $v_n$ 
such that the set $\{u_k, k \leq n\} \cup \{v_n\}$ is prefix-free.
For $n=0$, take two different letters in $\Sigma$ ($|\Sigma|>1$ comes from the
hypothesis as $\sshift{X}$ is non-empty and does not contain any periodic
configuration).

Now suppose we obtain $(u_0, \ldots u_n)$ and $v_n$.
Since $\sshift{X}$ is supposed to be minimal, $v$ appears in an uniformly
recurrent way in a configuration of $\sshift{X}$ and since
$\sshift{X}$ contains no periodic configuration,
there exists two different right-extensions of $v$: $w$ and $w'$ of the same
length. Taking $u_{n+1} = w$ and $v_{n+1}=w'$ ends the recurrence.
\end{proof}

To obtain our theorem, we will need a subshift $\sshift{X}$ for which we
control precisely the sequence $u_n$.

\begin{lemma}
\label{lemma:sequn}
There exists a (non-empty) 1-dimensional minimal effective subshift $\sshift{X}$
and a computable sequence $(u_n)_{n \in \N}$ of words in the
language of $\sshift{X}$ so that $|u_n|\leq n$ and no $u_n$ is prefix of
another one.
\end{lemma}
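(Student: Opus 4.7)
The plan is to refine the inductive construction of Lemma~\ref{lemma:sequnweak} inside a specifically chosen minimal effective aperiodic subshift $\sshift{X}$ so as to control the lengths $|u_n|$. In that construction each $u_{n+1}$ is obtained by branching off a growing ``spine'' $v_n$, and the new length is precisely the length at which $v_n$ first admits two different right-extensions in the language of $\sshift{X}$. If $\sshift{X}$ is chosen with a high density of right-special factors, this branching happens after only one additional letter, so the spine grows linearly in $n$ and a final re-indexing yields $|u_n|\leq n$.

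Concretely, I would choose $\sshift{X}$ to be the subshift of an explicit primitive aperiodic substitution whose complexity grows linearly with slope strictly greater than $1$, for instance a constant-length substitution on a three- or four-letter alphabet arranged so that at every length many right-special factors coexist. Such a subshift is minimal, aperiodic, and has a decidable language, hence is effective. Inside this $\sshift{X}$, at step $n+1$ of the induction I would algorithmically search for two distinct letters $a,b\in\Sigma$ with both $v_n a$ and $v_n b$ in the language of $\sshift{X}$ and with $\{u_0,\dots,u_n,v_n a,v_n b\}$ still prefix-free, then set $u_{n+1}:=v_n a$, $v_{n+1}:=v_n b$, and iterate. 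Decidability of the language makes this search effective; the bound $|v_{n+1}|=|v_n|+1$ at each step gives $|u_n|\leq|v_0|+n$, and a computable monotone reparametrization of the sequence then yields the stronger bound $|u_n|\leq n$.

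The main obstacle is justifying that at every step the extension of $v_n$ can really be taken of length exactly $|v_n|+1$. For subshifts of minimal complexity (Sturmian subshifts, where $p(k)=k+1$), right-special factors are unique at each length, so the spine $v_n$ is generically not right-special and one may need to walk more than one symbol before finding a branching; this is precisely why Lemma~\ref{lemma:sequnweak} gives no length control in general. The chosen substitutive subshift avoids this issue because its larger complexity forces some right-extension of any prescribed factor to be itself right-special within one step, a standard combinatorial consequence of strictly super-affine complexity in a substitutive setting. Once this abundance is established and the above search is verified to be computable, the bound $|u_n|\leq n$ follows after re-indexing, completing the proof.
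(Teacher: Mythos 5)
Your overall strategy is genuinely different from the paper's: instead of tightening the inductive branching argument of Lemma~\ref{lemma:sequnweak}, the paper builds a concrete minimal effective subshift $\sshift{X}_7 \otimes \sshift{X}_8$ from two Toeplitz words with coprime period structures and then exhibits, completely explicitly, a prefix-free family containing more than $8^n$ words of length $8^{n-1}$ for each $n$; enumerating that family by increasing length yields $|u_n|\leq n$ without ever requiring a branching at every single length. Your route could in principle work, but it hinges on one claim that is not only unjustified but false for the class of subshifts you propose.

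The gap is the assertion that in a primitive substitutive subshift of linear complexity the spine $v_n$ can be kept right-special at every step, i.e.\ that one can always branch after a single additional letter. Your construction needs an infinite chain $v_0,\ v_0b_0,\ v_0b_0b_1,\dots$ in which every term is right-special, equivalently a right-infinite extension of $v_0$ all of whose prefixes are right-special. A counting argument rules out the ``every factor branches within one step'' version you invoke: if $p$ denotes the complexity function, the number of right-special factors of length $k+1$ is at most $p(k+2)-p(k+1)$, which is bounded by a constant for primitive substitutive (indeed for all linearly recurrent) subshifts, whereas the number of factors of length $k$ is $p(k)=\Theta(k)$; since a factor of length $k$ admitting a right-special one-letter extension is the length-$k$ prefix of a right-special factor of length $k+1$, only $O(1)$ of the $\Theta(k)$ factors of length $k$ have such an extension. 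So ``strictly super-affine complexity'' does not buy you the property, and it is not a standard consequence of anything in the substitutive setting. Even the weaker statement you actually need --- a single infinite prefix-nested chain of right-special factors, located and traversed computably --- fails in natural examples (in a Sturmian subshift the unique right-special factors of consecutive lengths are nested as suffixes, not as prefixes) and is not established for any specific subshift in your argument. To repair the proof you would either have to exhibit a subshift carrying such a computable chain, or follow the paper's tactic of compensating for rare branching lengths by harvesting exponentially many prefix-free words at each of a sparse sequence of lengths and then sorting by length.
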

\begin{proof}
We will use a construction based on Toeplitz words.
Let $p$ be an integer.
For an integer $n$, denote by $\phi_p(n)$ the first non-zero digit in the
writing of $n$ in base $p$, \eg $\phi_3(15) = 2$.

Let $w_p = \phi_p(1) \phi_p(2) \dots $.
For example $w_4 = 12311232123312311231123\ldots$.

Now let $\sshift{X}_p$ be the shift of all configurations $c$ so that all words of
$c$ are words of $w_p$. Note that any word of size $n$ appearing in
$w_p$ appears at a position less than $p^n$ so that $\sshift{X}_p$ is an effective
subshift.

Now the following statements are clear:
\begin{itemize}
	\item For every word $w$ in $w_p$, there exists $k$ so that 
	  for every configuration $c \in \sshift{X}_p$,  $w$ appears periodically in $c$ of period
	  $p^k$  ($w$ might appear in some other places)
	\item $\sshift{X}_p$ is minimal (a consequence of the previous statement)
\end{itemize}	

If $u_1$ and $u_2$ are two words over $\Sigma_1$ and $\Sigma_2$ of the same size,
we write $u_1 \otimes u_2$
for the word over $\Sigma_1 \times \Sigma_2$ whose $i$th projection is $u_i$
($i\in\left\{1,2\right\}$).

Now let $\sshift{X} = \sshift{X}_7 \otimes \sshift{X}_8$. $\sshift{X}$ is a shift, 
and  $\sshift{X}$ is minimal\footnote{Note that the Cartesian product of two
minimal shifts is not always minimal \cite{Salimov}.}:
If $c_1 \otimes c_2 \in \sshift{X}_7 \otimes \sshift{X}_8$ and 
$u_1$ and $u_2$ are two patterns resp. of $w_7$ and $w_8$ of the same size,
then $u_1$ appears periodically in $c_1$ of period $7^{k_1}$ and $u_2$
appears periodically in $c_2$ of period $8^{k_2}$. As these two numbers are
relatively prime, there exists a common position $i$ so that $u_1$
(resp. $u_2$) appears in position $i$ in $c_1$ (resp $c_2$), so that
$u_1 \otimes u_2$ appears in $c_1 \otimes c_2$.

Now we can find the sequence $u_n$.

Let $u$ be a word in $\{5,6,7\}^\star\{1,2,3,4\}$.
We define $f_8(u)$ inductively as follows:
\begin{itemize}
	\item If $|u| = 1$, then $f_8(u) = u$.
	\item If $u = xu_1$ then let $v = f_8(u_1)$ and $n$ be the length of $v$.
	  \begin{itemize}
		  \item If $x = 5$ then $f_8(u) = 567v_11234567v_21234567v_31\dots v_n1234$
		  \item If $x = 6$ then $f_8(u) = 67v_11234567v_21234567v_31\dots v_n12345$
		  \item If $x = 7$ then $f_8(u) = 7v_11234567v_21234567v_31\dots v_n123456$
	  \end{itemize}	  
\end{itemize}	
Now it is clear that each $f_8(u)$ is in $w_8$ and by a straightforward
induction, no $f_8(u)$ is prefix of another.
Let $S_8 = \{f_8(u) | u \in \{5,6,7\}^\star\{1,2,3,4\} \}$
Note that $f_8(u)$ is of length $8^{|u|-1}$.
In particular we have $4\times 3^{n-1}$ words of length $8^{n-1}$ in $S_8$.

We do the same with $w_7$, with words $u \in \{4,5,6\}^\star\{1,2,3\}$, to
obtain a set $S_7$ containing $3\times 3^{n-1}$ words of length $7^{n-1}$.
We can always enlarge all words in $S_7$ to obtain a set $S'_7$ containing
$3\times 3^{n-1}$ words of length $8^{n-1}$.

Now take $S = S'_7 \otimes S_8$. This set contains $12\times 9^{n-1} > 8^n$ words
of size $8^{n-1}$ for each $n$ and no word of $S$ is prefix of one another.
Now an enumeration in increasing order of $S$ gives the sequence
$(u_n)_{n\in\N}$.

The whole construction is clearly effective.
\end{proof}

\begin{theorem}
\label{thm:1deffnonbound}
Given a partial computable function $\varphi$, there exists a $1-$dimensional
effective subshift $\sshift{X}_{\varphi}$ such that any quasi-periodic
configuration $\conf{c}$ in $\sshift{X}_{\varphi}$ has a quasi-periodicity
function $\qpfunc{c}$ such that $\qpfunc{c}(n)\geq\varphi(n)$ when $\varphi(n)$ is
defined.
\end{theorem}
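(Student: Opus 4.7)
My plan is to build $\sshift{X}_{\varphi}$ as a two-layer effective subshift over $\Sigma\times\{0,1\}$, where the bottom layer lies in the minimal effective subshift $\sshift{X}$ of Lemma~\ref{lemma:sequn} and the top layer carries marks. The intuition is that for each $n$, restricting the top layer to the positions where $u_n$ starts on the bottom layer produces a bi-infinite binary word whose $1$'s can be forced both to be sparse (at $\Z$-distance at least $\varphi(n)$) and to be non-absent. Since $|u_n|\leq n$, a $1$ at the start of a $u_n$-occurrence gives a pattern of size $\leq n$ whose re-occurrence requires both the same bottom word and a $1$ on top, so two such occurrences must lie at $\Z$-distance at least $\varphi(n)$.

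Beyond the bottom-layer patterns forbidden in $\sshift{X}$, I will enumerate, every time $\varphi(n)$ is found to halt with value $m$, two families of finite forbidden patterns: (a) every pattern exhibiting two $u_n$-starts on the bottom at $\Z$-distance strictly less than $m$, both carrying $1$ on top; and (b) every finite window containing $m$ $u_n$-starts on the bottom, all of them carrying $0$ on top. Effectivity of the whole list is immediate from the partial recursivity of $\varphi$. Rule (a) spaces marked $u_n$-positions in $\Z$ by at least $\varphi(n)$, and rule (b), combined with the minimality of $\sshift{X}$ (which forces infinitely many $u_n$'s on the bottom of any configuration), implies that in every configuration at least one out of every $m$ consecutive $u_n$-starts carries a $1$.

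Non-emptiness comes almost for free: starting from any $c_1\in\sshift{X}$, the prefix-freeness of $(u_n)$ makes the sets $S_n$ of $u_n$-starts in $c_1$ pairwise disjoint, so I can define the top layer $c_2$ independently on each $S_n$; placing a $1$ at every $\varphi(n)$-th element of $S_n$ (when $\varphi(n)$ is defined) and $0$ elsewhere satisfies both rules simultaneously. For a quasi-periodic $\conf{c}\in\sshift{X}_\varphi$ and any $n$ with $\varphi(n)=m$ defined, rule (b) provides a position $i_0$ where $u_n$ starts and $c_2(i_0)=1$; the pattern $P=\conf{c}|_{[i_0-n,i_0+n]}$ then appears in $\conf{c}$, and any other occurrence of $P$ forces a $u_n$-start with $1$ on top at the new center, which by rule (a) lies at $\Z$-distance at least $m$ from $i_0$.

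The delicate step is converting this $\Z$-spacing of occurrences of $P$ into the quantitative bound stated in the theorem: a window of $\conf{c}$ of radius roughly $(m-2n)/2$ placed just to the right of $i_0$ contains no occurrence of $P$, so I only directly obtain $\qpfunc{c}(n)\geq (\varphi(n)-O(n))/2$. The cleanest way to recover the stated inequality $\qpfunc{c}(n)\geq\varphi(n)$ is to apply the entire construction to the auxiliary partial recursive function $\varphi'(n)=2\varphi(n)+2n+1$, which has the same domain as $\varphi$; the resulting subshift then witnesses the desired bound whenever $\varphi(n)$ is defined.
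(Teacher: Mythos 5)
Your construction is essentially the paper's: a $\{0,1\}$ marking layer on top of the minimal subshift of Lemma~\ref{lemma:sequn}, with constraints on the marks at the (prefix-free, hence unambiguous) occurrences of $u_n$ enumerated as $\varphi(n)$ halts, so that marked occurrences are both forced to exist and forced to be sparse. Your explicit check of non-emptiness and your handling of the factor-of-two (plus $O(n)$) loss when converting the spacing of marked occurrences into a lower bound on $\qpfunc{c}(n)$ --- repaired by running the construction on $\varphi'(n)=2\varphi(n)+2n+1$ --- are in fact more careful than the paper's proof, which asserts $\qpfunc{c}(n)\geq\varphi(n)$ directly from the spacing.
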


\begin{proof}
Consider the subshift $\sshift{X}$ and the computable sequence
$(u_n)_{n\in\N}$ that are given by Lemma~\ref{lemma:sequn}.
Since Lemma~\ref{lemma:sequn} ensures that $|u_n|\leq n$, a sequence
$(u_n)_{n\in\N}$ with the additional property that $|u_n|=n$ is also computable
since we can compute an extension of the words $u_n$ in $\sshift{X}$ since it is
minimal and effective and the prefix-free property is retained while taking
extensions. We assume this additional property in this proof.

Let $\Sigma'=\Sigma\times\left\{0,1\right\}$. We define
$\sshift{X}_{\varphi}$ as a subshift of $\sshift{X}\times\left\{0,1\right\}^{\Z}$.

Compute in parallel all the $\varphi(n)$.
When $\varphi(n)$ is computed we add the following additional constraints:
On the $\left\{0,1\right\}$ layer of $\Sigma'$ we force a $1$ to appear on the
first letter of $u_n$ once every $\varphi(n)+1$ occurrences of $u_n$, the
first letter of all other occurrences of $u_n$ being $0$.
There is no ambiguity since no $u_n$ is prefix of another one.
This defines $\sshift{X}_{\varphi}$ as an effective subshift since $\sshift{X}$
is effective and $(u_n)_{n\in\N}$ is computable.

Every $u_n$ appears in every configuration of $\sshift{X}$ since it is minimal.
If $\varphi(n)$ is defined, then every $u_n$ with a $1$ on the
$\left\{0,1\right\}$ layer appears exactly every $\varphi(n)$ occurrences of $u_n$'s with a $0$ on its $\left\{0,1\right\}$ layer
in every configuration of $\sshift{X}_{\varphi}$.
Therefore, for any quasi-periodic
configuration $\conf{c}$ of $\sshift{X}_{\varphi}$ we have that
$\qpfunc{c}(n)\geq\varphi(n)$ where $\varphi(n)$ is defined which completes the proof.
\end{proof}

\begin{cor}
	There exists a 1-dimensional effective subshift $\sshift{X}$ such that 
every quasi-periodic configuration $c$ in $\sshift{X}$ has a
quasi-periodicity function which is not bounded by any computable function.
\end{cor}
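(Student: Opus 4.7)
The plan is to apply Theorem~\ref{thm:1deffnonbound} to a partial computable function $\varphi$ carefully chosen to diagonalize against every total computable function. Fixing an effective enumeration $(\varphi_e)_{e\in\N}$ of the partial computable functions from $\N$ to $\N$, I would define $\varphi(e)=\varphi_e(e)+1$ whenever $\varphi_e(e)$ halts, leaving $\varphi(e)$ undefined otherwise. This $\varphi$ is itself partial computable by construction, so Theorem~\ref{thm:1deffnonbound} yields an effective subshift $\sshift{X}_\varphi$ in which every quasi-periodic configuration $\conf{c}$ satisfies $\qpfunc{c}(n)\geq\varphi(n)$ on the domain of $\varphi$.

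Next I would make sure $\sshift{X}_\varphi$ contains at least one quasi-periodic configuration, so that the conclusion is not vacuous. Non-emptiness of $\sshift{X}_\varphi$ is immediate from the construction in the proof of Theorem~\ref{thm:1deffnonbound}: starting from any $\conf{c}\in\sshift{X}$, one can mark one out of every $\varphi(n)+1$ occurrences of $u_n$ with a $1$ on the $\{0,1\}$ layer for each $n$ in the domain of $\varphi$, and the prefix-free property of the sequence $(u_n)_{n\in\N}$ from Lemma~\ref{lemma:sequn} guarantees that these markings are mutually consistent. The Birkhoff-type theorem quoted in the introduction then provides a quasi-periodic configuration $\conf{c}$ in the non-empty effective subshift $\sshift{X}_\varphi$.

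Finally, pick any total computable function $f:\N\to\N$. Since $f$ is total, there exists an index $e$ with $f=\varphi_e$; in particular $\varphi_e(e)$ halts, so $\varphi(e)$ is defined and equals $f(e)+1$. Theorem~\ref{thm:1deffnonbound} then gives $\qpfunc{c}(e)\geq\varphi(e)=f(e)+1>f(e)$, so $f$ cannot bound $\qpfunc{c}$. Since $f$ was arbitrary, no total computable function bounds $\qpfunc{c}$, which is the desired conclusion. There is no real obstacle in this argument: the diagonalization is a textbook trick, and the only point that needs a brief verification is the non-emptiness of $\sshift{X}_\varphi$, which is essentially transparent from the construction underlying Theorem~\ref{thm:1deffnonbound}.
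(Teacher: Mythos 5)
Your proposal is correct and follows essentially the same route as the paper: diagonalize by setting $\varphi(e)=\varphi_e(e)+1$ on the halting set of the enumeration, apply Theorem~\ref{thm:1deffnonbound}, and observe that any total computable $f$ equals some $\varphi_e$ and is therefore exceeded at $e$. Your extra remark on non-vacuousness (non-emptiness of $\sshift{X}_\varphi$ plus the Birkhoff-type theorem) is a sensible verification the paper leaves implicit, but it does not change the argument.
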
	
\begin{proof}
Let $(\varphi_n)_{n\in\N}$ be an effective enumeration of partial computable
functions.

Let $\varphi(n)=\varphi_n(n)+1$; $\varphi$ is also a partial computable function;
we can therefore find an effective one dimensional subshift
$\sshift{X}_{\varphi}\subseteq\Sigma^{\Z}$ via Theorem~\ref{thm:1deffnonbound}
such that any quasi-periodic configuration $\conf{c}$ of $\sshift{X}_{\varphi}$
is such that $\qpfunc{c}\geq\varphi$ where $\varphi$ is defined, hence
$\qpfunc{c}$ is not recursively bounded.
\end{proof}

\begin{theorem}
\label{thm:qpfuncunbound}
There exists a tileset such that every quasi-periodic tiling has a
quasi-periodicity function that is not recursively bounded.
\end{theorem}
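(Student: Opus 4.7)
The plan is to promote the preceding Corollary from one to two dimensions via the simulation theorems of Aubrun--Sablik~\cite{mathieunathaliesoficeff} and Durand--Romashchenko--Shen~\cite{drseffsofic}: for every effective $1$-dimensional subshift $\sshift{X}\subseteq\Sigma^{\Z}$ one can build a tileset $\ForbPat$ on an enlarged alphabet $\couleur{Q}\supseteq\Sigma$ and a symbol-wise projection $\pi\colon\couleur{Q}\to\Sigma$ so that $\pi(\Tilings)$ is exactly the set of $2$-dimensional configurations all of whose horizontal rows belong to $\sshift{X}$. I apply this to the subshift $\sshift{X}_{\varphi}$ produced by Theorem~\ref{thm:1deffnonbound} with $\varphi$ chosen by a polynomial diagonalization against an enumeration $(\varphi_n)_{n\in\N}$ of partial computable functions, say $\varphi(n)=(2\varphi_n(n)+3)^2$. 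The resulting tileset $\ForbPat$ is non-empty (because $\sshift{X}_{\varphi}$ is), and the Birkhoff--Durand theorem stated at the beginning of the paper produces at least one quasi-periodic tiling $\conf{c}\in\Tilings$.

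The crux is to show that \emph{every} quasi-periodic $\conf{c}\in\Tilings$ already has $\qpfunc{c}$ bounded below by essentially $\sqrt{\varphi}$. Setting $\conf{d}=\pi(\conf{c})$, the symbol-wise projection preserves quasi-periodicity and satisfies $\qpfunc{d}(n)\leq\qpfunc{c}(n)$; moreover every row of $\conf{d}$ is a configuration of $\sshift{X}_{\varphi}$, so the positions that start a ``$1$-marked'' occurrence of $u_n$ (as built in the proof of Theorem~\ref{thm:1deffnonbound}) are separated, inside any row, by at least $\varphi(n)+1$ occurrences of $u_n$; their row density, and hence their density in $\conf{d}$, is therefore at most $1/(\varphi(n)+1)$. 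Choose a $[-n;n]^2$ pattern $R$ of $\conf{d}$ containing a $1$-marked $u_n$ at some fixed interior offset: quasi-periodicity of $\conf{d}$ forces $R$ to appear inside each of the $\sim (2L+1)^2/(2m+1)^2$ disjoint $(2m+1)\times(2m+1)$ windows tiling $[-L;L]^2$, where $m=\qpfunc{d}(n)$, and each such $R$-occurrence contributes a $1$-marked $u_n$ at the chosen offset. Comparing this lower count with the row-density upper bound and letting $L\to\infty$ yields $(2m+1)^2\geq\varphi(n)+1$, so
\[
\qpfunc{c}(n)\;\geq\;\qpfunc{d}(n)\;\geq\;\tfrac12\bigl(\sqrt{\varphi(n)+1}-1\bigr)\;>\;\varphi_n(n)
\]
whenever $\varphi(n)$ is defined. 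Since every total computable function equals some $\varphi_n$ and is then defined on all of $\N$, this shows that $\qpfunc{c}$ is dominated by no total computable function.

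The principal obstacle is the sofic-simulation step, which I would treat as a black box from \cite{mathieunathaliesoficeff,drseffsofic}; once it is available, the remaining density/counting argument is elementary, and the loss of a square root is absorbed by having squared the diagonal.
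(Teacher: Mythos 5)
Your proof is correct and follows the same overall architecture as the paper's: diagonalize against an enumeration of partial computable functions, build $\sshift{X}_{\varphi}$ via Theorem~\ref{thm:1deffnonbound}, and lift it to a tileset with the simulation theorems of \cite{mathieunathaliesoficeff,drseffsofic}. Where you genuinely diverge is in the step that transfers the lower bound from one to two dimensions. The paper exploits the precise shape of the simulation --- the $\Sigma$-layer of every tiling is constant in the vertical direction --- so a quasi-periodic tiling projects onto a single quasi-periodic configuration of $\sshift{X}_{\varphi}$ whose quasi-periodicity function it dominates, and the one-dimensional corollary finishes the proof. You instead argue by density directly in two dimensions: the starting positions of $1$-marked occurrences of $u_n$ have density at most $1/(\varphi(n)+1)$ in each row, hence in the whole configuration, while quasi-periodicity with constant $m=\qpfunc{d}(n)$ forces one such position in each of the disjoint $(2m+1)\times(2m+1)$ windows, so $(2m+1)^2\geq\varphi(n)+1$; the square-root loss is then absorbed by having squared the diagonal. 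This buys robustness (you never need the rows to be identical, only that each row lies in $\sshift{X}_{\varphi}$, and you bypass the intermediate corollary) at the price of a longer computation. One caveat: your statement of the black box is inaccurate --- $\pi(\Tilings)$ is not ``the set of all configurations whose rows belong to $\sshift{X}$'' but the smaller set of configurations whose rows are all \emph{equal} to a single configuration of $\sshift{X}$, and it is not known that the larger set is sofic in general. Since your counting argument only uses the inclusion of $\pi(\Tilings)$ in that larger set together with non-emptiness, both of which hold for the actual construction, this is a misstatement of the cited result rather than a gap in the proof.
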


\begin{proof}
Take the effective $1-$dimensional subshift of the previous corollary (as a
subshift of $\Sigma^{\Z}$): $\sshift{X}_{\varphi}$.
There exists a set of tilings (or $2-$dimensional SFT)
$\sshift{X}_{\varphi}^2\subseteq (Q\times\Sigma)^{\Z^2}$ encoding
it~\cite{mathieunathaliesoficeff,drseffsofic} in the following way:

In any configuration of $\sshift{X}_{\varphi}^2$, the rows of the $\Sigma-$layer
are identical, that is, if we write this configuration as $c_{Q}\times
c_{\Sigma}\in Q^{\Z^2}\times\Sigma^{\Z^2}$, for any $i,j$ in $\Z$,
$c_{\Sigma}(i,j)=c_{\Sigma}(i,j+1)$. Moreover, the projection:
\[
\begin{array}{rrcl}
p:&(Q\times\Sigma)^{\Z^2}&\to&\Sigma^{\Z}\\
& c_{Q}\times c_{\Sigma} & \to &
\begin{array}{rcl}
\Z&\to&\Sigma\\
n&\to&c_{\Sigma}(n,0)\\
\end{array}
\\
\end{array}
\]
of $\sshift{X}_{\varphi}^2$ is exactly
$\sshift{X}_{\varphi}$ (\ie $p(\sshift{X}_{\varphi}^2)=\sshift{X}_{\varphi}$).
Since the configurations of $\sshift{X}_{\varphi}^2$ are the Cartesian product
of a construction layer (the $Q^{\Z^2}$ part) and the effective $1-$dimensional
subshift $\sshift{X}_{\varphi}$ repeated on the rows, the quasi-periodicity
function of any quasi-periodic configuration of $\sshift{X}_{\varphi}^2$ is
greater or equal to the quasi-periodicity function of the quasi-periodic 1-dimensional configuration it represents.
\end{proof}

Note that quasi-periodicity configurations obtained in
the constructions in \cite{mathieunathaliesoficeff,drseffsofic} are rather
benign.
If we start from a $1-$dimensional quasi-periodic configuration $\conf{c}$, then
the quasi-periodic tilings $\conf{x}$ that are projected onto $\conf{c}$ have a
quasi-periodicity function that is computable knowing the quasi-periodicity
function of $\conf{c}$.

\section{Note}

Theorem~$9$ in \cite{CervelleD04} stated the contrary of Theorem~\ref{thm:main}:
``there exists a tileset such that all its tilings are quasi-periodic and none
of its quasi-periodicity function is computably bounded''.
Besides some errors that can be easily corrected, there is a big
problem in the construction they claim to give.
They encode $K$, a recursively enumerable but not recursive set, in every tiling
in a way such that if $i\in K$ then it must appear in every tiling in a pattern
of size $g(i)$ where $g$ is a computable function.
This property allows by itself to decide $K$:
For an integer $i$, compute $g(i)$ and all the possible encodings of $i$ if it
were to appear in a tiling; patterns that do not appear in a tiling of the plane
are recursively enumerable\footnote{Simply try to tile arbitrary big patterns
around it and if it is not possible claim that the pattern does not appear in a
tiling.} and thus, when we have enumerated all the patterns coding $i$ we know
that $i\not\in K$. Since $K$ is supposed recursively enumerable, this allows to
decide $K$.

\bibliographystyle{plain}
\bibliography{biblio/article,biblio/ca-faq,biblio/jac08,biblio/stacs08}
\end{document}